\pdfoutput=1
\newif\ifFull
\Fulltrue
\documentclass{article}
\usepackage{graphicx}
\usepackage{url}
\usepackage{times}
\usepackage{epstopdf}
\usepackage{subfigure}
\ifFull
\usepackage{algorithmic}
\else
\usepackage[noend]{algorithmic}
\fi

%
%
\makeatletter
\def\@begintheorem#1#2{\sl \trivlist \item[\hskip \labelsep{\bf #1\ #2:}]}
\def\@opargbegintheorem#1#2#3{\sl \trivlist
      \item[\hskip \labelsep{\bf #1\ #2\ #3:}]}
\makeatother

\newtheorem{theorem}{Theorem}

\ifFull
\newenvironment{proof}{\noindent{\bf Proof:}}{\hspace*{\fill}\rule{6pt}{6pt}\bigskip}
\fi

\begin{document}

\title{Invertible Bloom Lookup Tables}

\author{
{Michael T. Goodrich} \\
Dept.~of Computer Science \\ 
University of California, Irvine 
\ifFull
\\ \url{http://www.ics.uci.edu/~goodrich/}
\fi
\and
Michael Mitzenmacher \\
Dept.~of Computer Science \\ 
Harvard University 
\ifFull
\\ \url{http://www.eecs.harvard.edu/~michaelm/}
\fi
}

\date{}

\maketitle 
\begin{abstract}
We present a version of the Bloom filter data structure that supports
not only the insertion, deletion, and lookup of key-value pairs, but
also allows a complete listing of the pairs it contains with high probability,
as long the number of key-value pairs is below a designed threshold.
Our structure allows the number of key-value pairs to greatly exceed
this threshold during normal operation. Exceeding the threshold simply
temporarily prevents content listing and reduces the probability of a
successful lookup. If entries are later deleted to return the
structure below the threshold, everything again functions
appropriately.  We also show that simple variations of our structure
are robust to certain standard errors, such as the deletion of a key
without a corresponding insertion or the insertion of two distinct
values for a key.  The properties of our structure make it suitable
for several applications, including database and networking
applications that we highlight.
\end{abstract}

\section{Introduction}
The Bloom filter data structure~\cite{bloom1970space} is a well-known
way of probabilistically supporting dynamic set membership queries 
that has been used in a multitude of applications (e.g.,
see~\cite{broder2004network}). The key feature of a standard Bloom
filter is the way it trades off query accuracy for space efficiency,
by using a binary array $T$ (initially all zeroes) and $k$ random hash
functions, $h_1,\ldots,h_k$, to represent a set $S$ by assigning
$T[h_i(x)]=1$ for each $x\in S$.  To check if $x\in S$ one can check
that $T[h_i(x)]=1$ for $1 \leq i \leq k$, with some chance of a false
positive.  This representation of $S$
does not allow one to list out the contents of $S$ given only
$T$. This aspect of Bloom filters is sometimes viewed as a
feature, in settings where 
some degree of privacy protection
is desired
(e.g., see~\cite{bbav-ppidn-09,bc-pesue-07,goh03,nk-csbf-09,pws-pppp-06}).
Still, in many domains one would benefit from a similar set 
representation that would also allow listing out
the set's contents~\cite{eg-sirrt-10}.

In this paper, we are interested not in simply representing a set, but
instead in methods for probabilistically representing a lookup table
(that is, an associative memory) of key-value pairs, where the keys
and values can be represented as fixed-length integers.  Unlike
previous approaches (e.g.,
see~\cite{bonomi2006beyond,chazelle2004bloomier}), we specifically
desire a data structure that supports the listing out of all of its
key-value pairs.  We refer to such a structure as an \emph{invertible
Bloom lookup table} (IBLT).

\subsection{Related Work}
Our work can be seen as an extension of the invertible Bloom
filter data structure of Eppstein and Goodrich~\cite{eg-sirrt-10},
modified to store key-value pairs instead of only keys.  Our
analysis, however, supersedes the analysis of the previous paper in
several respects, in terms of efficiency and tightness of the analysis
(as well as correcting some small deficiencies).
In particular, our analysis demonstrates the natural connection between
these data structures and cores of random hypergraphs, similar to the
connection found previously for cuckoo hashing and erasure-correcting
codes (e.g., see~\cite{dietzfelbinger2009tight,luby2001efficient}).
This provides both a significant constant factor reduction in the required
space for the data structure, as well as an important reduction in the error probability
(to inverse polynomial, from constant in \cite{eg-sirrt-10}).  
In addition, our IBLT supports some usage cases and
applications (discussed later in this section) that are not supported
by a standard invertible Bloom filter.

Given its volume,
reviewing all previous work on Bloom filters is, not surprisingly,
beyond the scope of this paper 
(e.g., see~\cite{bmpsv-iccbf-06,broder2004network,pss-chseb-09} for
some excellent surveys).
Nevertheless,
two closely related works include 
Bloomier filters~\cite{chazelle2004bloomier} and
Approximate Concurrent State Machines (ACSMs)~\cite{bonomi2006beyond}, 
which are
structures to store and track key-value pairs.  
Song {\it et al.}~\cite{sdtl-fhtlu-05} store elements in a Bloom-like hash
table in a fashion that is somewhat similar to that of 
a cuckoo hash table~\cite{dm-chfa-03,naor-history,pr-ch-04}.
Cohen and Matias~\cite{cm-sbf-03} describe an extended Bloom filter that
allows for multiplicity queries on multisets.
While an IBLT is not
intended for changing values, it can be used in such settings 
where a key-value pair can be explicitly deleted and a new value for the key
re-inserted.  Again, an IBLT has additional features, including listing,
graceful handling of data exceeding the listing threshold, and counting multiplicities,
which make it useful for several
applications where these other structures are insufficient.

Another similar structure is the recently developed counter braid
architecture~\cite{lu2008counter}, which keeps an updatable count
field for a set of flows in a compressed form, with the compressed
form arising by a careful use of hashing and allowing reconstruction
of the count for each flow.  Unlike an IBLT, however, the flow list
must be kept explicitly to read out the flow counts, and such lists do
not allow for direct lookups of individual values.  There are several
other differences from our work due to their focus on counters, but
perhaps most notable is their decoding algorithm, which
utilizes belief propagation.  Additional work in the area of
approximate counting of a similar flavor but with very different goals
from the IBLT includes the well-known CM-sketch \cite{CM-sketch} and
recent work by Price \cite{Price}.

\subsection{Our Results}
We present a deceptively simple
variation of the Bloom filter data structure that is
designed for key-value pairs and further avoids the limitation of
previous structures (such
as~\cite{bonomi2006beyond,chazelle2004bloomier}) that do not allow the
listing of contents.  
As mentioned above,
we call our structure an {\em invertible Bloom lookup table}, 
or IBLT for short.
Our IBLT supports insertions, deletions, and lookups in $O(k)$ time,
where $k$ is the number of random hash functions
used (which will typically be a constant in practice).
Just as Bloom filters have false positives,
our lookup operation works only with constant probability, although this
probability can be made quite close to 1.

Our data structure
also allows for a complete listing of the
key-value pairs the structure contains, with high probability, whenever
the current number, $n$, of such pairs lies below a certain 
threshold capacity, $t$, a parameter that is part of the structure's design.
This listing takes $O(t)$ time.  
In addition, because the content-listing operation succeeds with high
probability, one can also use it as a backup in the case that
a standard lookup fails---namely, if a lookup fails, perform a
listing of key-value pairs until one can retrieve 
a value for the desired key.

Our IBLT construction is also space-efficient, 
requiring space\footnote{As in the standard RAM model,
  we assume in this paper that
  keys and values respectively fit in a single word
  of memory (which, in practice, could actually be any fixed number of
  memory words), and we characterize the space used by our data
  structure in terms of the number of memory words it uses.}  
at most linear 
in $t$, the threshold number of keys, even if the number, $n$,
of stored key-value pairs grows well
beyond $t$ (for example, to polynomial in $t$) at points in time.
One could of
course instead keep an actual list of key-value pairs with linear
space, but this would require space linear in $n$, i.e., 
the {\em maximum} number of keys, 
not the target number, $t$, of keys.  Keeping a list also
necessarily requires more computationally expensive lookup operations than our
approach supports.

We further show that with some additional checksums we can tolerate
various natural errors in the system.  For example, we can cope with
key-value pairs being deleted without first being inserted, or keys
being inserted with the same value multiple times, or keys mistakenly
being inserted with multiple values simultaneously.  Interestingly,
together with its contents-listing ability, this error tolerance leads
to a number of applications of the IBLT, which we discuss next.

\subsection{Applications and Usage Cases}
There are a number of possible applications and usage cases for
invertible Bloom lookup tables, some of which we explore here.

\paragraph{Database Reconciliation}
Suppose Alice and Bob hold distinct, but similar, copies, $D_A$ and
$D_B$, of an indexed database, $D$,
and they would like to reconcile the differences between $D_A$ and
$D_B$.
For example, Alice could hold a current version of $D$ and Bob could
hold a backup, or Alice and Bob could represent 
two different copies of someone's
calendar database (say, respectively on a desktop computer and 
a smartphone) that now need to be symmetrically synchronized.
Such usage cases are natural in database systems, 
particularly for systems that take the approach advocated 
in an interesting recent CACM article by Stonebraker~\cite{s-isdc-10}
that argues in favor of
sacrificing consistency for the sake of availability and
partition-tolerance, and then regaining consistency by performing database
reconciliation computations whenever needed.
Incidentally,
in a separate work, 
Eppstein {\it et al.}\footnote{Personal communication.} are
currently empirically exploring a similar application 
of the invertible Bloom table technology
for the reconciliation of two distributed mirrors
of a filesystem (say, in a peer-to-peer network).

To achieve such a reconciliation with low overhead, \hbox{Alice}
constructs an IBLT, $\cal B$,
for $D_A$, using indices as keys and checksums of her records as values.
She then sends the IBLT $\cal B$ to Bob, who then deletes 
index-checksum pairs from $\cal B$ corresponding to all
of his entries in $D_B$.
The remaining key-value pairs corresponding to insertions
without deletions identify records that Alice has that Bob doesn't
have, and the remaining key-value pairs corresponding to deletions
without insertions identify records that Bob has that Alice doesn't have.
In addition, as we show, Bob
can also use $\cal B$ to identify records that they both possess but
with different checksums.
In this way, Alice needs only to send a message $\cal B$
of size $O(t)$, where $t$ here is an upper bound on 
the number of differences between $D_A$ and $D_B$, for Bob 
to determine the identities of their differences (and a symmetric property
holds for a similar message from Bob to Alice).

\paragraph{Tracking Network Acknowledgments}
As another example application, consider a router $\cal R$
that would like to track
TCP sessions passing through $\cal R$.  
In this case, each session corresponds to a
key, and may have an associated value, such as the source or the
destination, that needs to be tracked.  When such flows are initiated in
TCP, particular control messages are passed that can be easily
detected, allowing the router to add the flow to the structure.
Similarly, when a flow terminates, control messages ending the flow
are sent.  The IBLT supports fast insertions and deletions and
can be used to list out the current flows in the system at various
times, as long as the number of flows is less than some preset threshold, $t$.
Note this work can be offloaded simply by sending a copy of the
IBLT to an offline agent if desired.  Furthermore, the
IBLT can return the value associated with a flow when queried,
with constant probability close to 1.  Finally, if at various points,
the number of flows spikes above its standard level to well above $t$,
the IBLT will still be able to list out the flows and perform
lookups with the appropriate probabilities once the total load returns
to $t$ or below.  Again, this is a key feature of the IBLT; all
keys and values can be reconstructed with high probability whenever the
number of keys is below a design threshold, but if the number of keys
{\em temporarily} exceeds this design threshold and later returns to
below this threshold, then the functionality will return at that later
time.  

In this networking setting, sometimes flows do not terminate properly,
leaving them in the data structure when they should disappear.
Similarly, initialization messages may not be properly handled,
leading to a deletion without a corresponding insertion.  We show that
the IBLT can be modified to handle such errors with minimal loss in
performance.  Specifically, we can handle keys that are deleted
without being inserted, or keys that erroneously obtain multiple
values.  Even with such errors, we provide conditions for which all
valid flows can still all be listed with high probability.  Our
experimental results also highlight robustness to these types of
errors.  (Eventually, of course, such problematic keys should be
removed from the data structure.  We do not concern ourselves with
removal policies here; see \cite{bonomi2006beyond} for some
possibilities based on timing structures.)

\paragraph{Oblivious Selection from a Table}
As a final motivating application,
consider a scenario where Alice has outsourced her data storage needs,
including the contents of an important indexed table, $\cal T$, of size $n$,
to a cloud storage server, Bob, because Alice has very limited storage
capacity (e.g., Alice may only have a smartphone).
Moreover, because her data is sensitive and she knows Bob is 
honest-but-curious regarding her data,
she encrypts each
record of $T$ using a secret key, and random nonces,
so that Bob cannot determine the
contents of any record from its encryption alone.  
Such encryptions are not sufficient, however, to fully protect 
the privacy of Alice's data, as
recent attacks show that the way Alice accesses her data
can reveal its contents (e.g., see~\cite{cwwz-sclwa-10}).
Alice needs a way of hiding 
any patterns in the way she accesses her data.

Suppose now that Alice would like to do a simple {\sc select} query on $\cal T$ 
and she is confident that the result will have a size at most $t$,
which is much less than $n$ but
still more than she can store locally.
Thus, she cannot use techniques from private 
information retrieval~\cite{ckgs-pir-98,y-pir-10},
as that would either require storing results back with Bob in a way that 
could reveal selected indices or using yet another server besides Bob.
She could use techniques from recent oblivious RAM 
simulations~\cite{go-spsor-96,gm-mpcho-10,pr-orr-20} to
obfuscate her access patterns, but doing so would require $O(n\log^2 n)$
I/Os. Therefore, using existing 
techniques would be inefficient.

By using an IBLT, on the other hand, she can perfrom her {\sc select} query
much more efficiently.  The advantage comes from the fact that an
insertion in an IBLT accesses a random set of cells (that is, memory
locations) whose addresses depend (via random hash functions) only on
the key of the item being inserted.  Alice thus uses all the indices
for $\cal T$ as keys, one for each record, and accesses memory as
though inserting each record into an IBLT of size $O(t)$.  In fact,
Alice only inserts those records that satisfy her {\sc select} query.
However, since Alice encrypts each write using a secret key and random
nonces, Bob cannot tell when Alice's write operations are actually
changing the records stored in the IBLT, and when a write operation is
simply rewriting the same contents of a cell over again re-encrypted with a
different nonce.  In this way Alice can obliviously 
create an IBLT of size $O(t)$ that contains the result of her query and is
stored by Bob.  Then, using existing methods for 
oblivious RAM simulation~\cite{gm-mpcho-10}, she can 
subsequently obliviously extract the
elements from her IBLT using $O(t\log^2 t)$ I/Os.
With this approach Bob learns nothing about her data from her access pattern.
In addition, the total number of I/Os for her to perform her query is 
$O(n+t\log^2 t)$, which is linear (and optimal) for any $t$ that is $O(n/\log^2 n)$.
We are not currently aware of any other way that Alice can achieve such a
result using a structure other than an IBLT.

\section{A Simple Version of the Invertible Bloom Lookup Table}

In this section, we describe and analyze a simple version of the IBLT.
In the sections that follow we describe how to augment and extend this
simple structure to achieve various additional performance goals.

The IBLT data structure, ${\cal B}$, is a randomized data structure storing
a set of key-value pairs.  It is designed with respect to a
threshold number of keys, $t$; when we say the structure is successful
for an operation with high probability it is under the assumption that
the actual number of keys in the structure at that time, which we henceforth
denote by $n$, is less than or equal to $t$.  
Note that $n$ can exceed $t$ during the course of
normal operation, however.  

As mentioned earlier, we assume throughout
that, as in the standard RAM model,
keys and values respectively fit in a single word
of memory (which, in practice, could actually be any fixed number of
memory words) and that each such word can alternatively be viewed as
an integer, character string, floating-point number, etc.  
Thus, without loss of generality, we view keys and values as
positive integers.

In many cases we take sums of keys and/or values; we must also
consider whether word-value overflow when trying to store these sums in a
memory word.  (That is, the sum is larger than what fits in a data
word.)  As we explain in more detail at appropriate points below, such
considerations have minimal effects.  In most situations, with
suitably sized memory words, overflow may never be a consideration.
Alternatively, if we work in a system that supports graceful
overflows, so that $(x+y)-y = x$ even if the first sum results in an
overflow, our approach works with negligible changes.  Finally, we can
also work modulo some large prime (so that vaues fit within a memory
word) to enforce graceful overflow.  These variations have negligible
effects on the analysis.  However, we point out that in many
settings (except in the case where we may have duplicate copies of the
same key-value pair), we can use XORs in place of sums in our algorithms,
and avoid overflow issues entirely.  

\subsection{Operations Supported}
Our structure supports the following operations:
\begin{itemize}
\item
{\sc insert}$(x,y)$: insert the key-value pair, $(x,y)$, into ${\cal B}$.
This operation always succeeds, assuming that all keys are distinct.
\item
{\sc delete}$(x,y)$: delete the key-value pair, $(x,y)$, from ${\cal B}$.
This operation always succeeds, provided $(x,y)\in {\cal B}$, which
we assume for the rest of this section.
\item
{\sc get}$(x)$: return the value $y$ such that there is a
key-value pair, $(x,y)$, in ${\cal B}$.
If $y=\mbox{null}$ is returned, then $(x,y)\not\in {\cal B}$ for any value of $y$. 
With low (but constant) probability, this operation
may fail, returning a ``not found'' error condition.  In this case there may or may
not be a key-value pair $(x,y)$ in ${\cal B}$.
\item
{\sc listEntries}$()$: list all the key-value pairs being stored in ${\cal B}$.
With low (inverse polynomial in $t$) probability, this operation
may return a partial list along with an ``list-incomplete'' error condition.
\end{itemize}

When an IBLT ${\cal B}$ is first created, it initializes a lookup
table $T$ of $m$ cells.  Each of the cells in $T$ stores a constant
number of fields, each of which corresponds to a single memory word.
We emphasize that an important feature of the data structure is that
at times the number of key-value pairs in ${\cal B}$ can be much
larger than $m$, but the space used for $\cal B$ remains $O(m)$ words.
(We discuss potential issues with word-value overflow where
appropriate.)  The {\sc insert} and {\sc delete} methods never fail,
whereas the {\sc get} and {\sc listEntries} methods, on the other
hand, only guarantee good probabilistic success when $n \leq t$. For
our structures we shall generally have $m = O(t)$, and often we can
give quite tight analyses on the constants required, as we shall see
below.

\subsection{Data Structure Architecture}
Like a standard Bloom filter, an IBLT
uses a set of $k$ random\footnote{We assume, for the sake of
	simplicity in our analysis,
	that the hash functions are fully random, but this does not
	appear strictly required.
	For example, the techniques of \cite{mitzenmacher2008simple}
	can be applied if the data has a
	sufficient amount of entropy.  For worst-case data, we are not
	aware of any results regarding
	the 2-core of a random hypergraph where the vertices for each
	edge are chosen according to
	hash functions with limited independence, which, as we will see,
	would be needed for such a result.
	Similar graph problems with limited independence have recently
	been studied in \cite{alon2008k}.  It
	is an interesting theoretical question to obtain better bounds
	on the randomness needed for our proposed IBLT data structure.
	}
hash functions, $h_1$, $h_2$,
$\ldots$, $h_k$, to determine where key-value pairs are stored.
In our case, each key-value pair, $(x,y)$,
is placed into cells 
$T[h_1(x)]$,
$T[h_2(x)]$,
$\ldots$
$T[h_t(x)]$.
In what follows, for technical reasons\footnote{Incidentally, this
   same technicality can be used to correct a small deficiency in 
   the paper of Eppstein and Goodrich~\cite{eg-sirrt-10}.},
we assume that the hashes
yield distinct locations.  This can be accomplished in various ways,
with one standard approach being to split the $m$ cells into $k$
subtables each of size $m/k$, and having each hash function choose one
cell (uniformly) from each subtable.  Such splitting does not affect
the asymptotic behavior in our analysis and can yield other benefits,
including ease of parallelization of reads and writes into the hash table.
\ifFull
(Another approach would be to select the first $k$ distinct hash values from a
specific sequence of random hash functions.)
\fi

Each cell contains three fields:
\begin{itemize}
\item
a \textsf{count} field,
which counts the number of entries that have been mapped to this cell,
\item
a \textsf{keySum} field,
which is the sum of all the keys that have been mapped to this cell,
\item
a \textsf{valueSum} field,
which is the sum of all the values that have been mapped to this cell.
\end{itemize}

Given these fields, which are initially $0$, performing 
the update operations is fairly straightforward:

\begin{itemize}
\item
{\sc insert}$(x,y)$:
\begin{algorithmic}[100]
\FOR {each (distinct) $h_i(x)$, for $i=1,\ldots,k$}
\STATE
add $1$ to $T[h_i(x)].\mbox{\textsf{count}}$
\STATE
add $x$ to $T[h_i(x)].\mbox{\textsf{keySum}}$
\STATE
add $y$ to $T[h_i(x)].\mbox{\textsf{valueSum}}$
\ENDFOR
\end{algorithmic}
\item
{\sc delete}$(x,y)$:
\begin{algorithmic}[100]
\FOR {each (distinct) $h_i(x)$, for $i=1,\ldots,k$}
\STATE
subtract $1$ from $T[h_i(x)].\mbox{\textsf{count}}$
\STATE
subtract $x$ from $T[h_i(x)].\mbox{\textsf{keySum}}$
\STATE
subtract $y$ from $T[h_i(x)].\mbox{\textsf{valueSum}}$
\ENDFOR
\end{algorithmic}
\end{itemize}

\subsection{Data Lookups}
We perform the {\sc get} operation in a manner similar to how
membership queries are done in a standard Bloom filter. The
details are as follows:
\begin{itemize}
\item
{\sc get}$(x)$:
\begin{algorithmic}[100]
\FOR {each (distinct) $h_i(x)$, for $i=1,\ldots,k$}
\IF {$T[h_i(x)].\mbox{\textsf{count}}\,=\, 0$}
\STATE 
\textbf{return} \mbox{null}
\ELSIF {$T[h_i(x)].\mbox{\textsf{count}}\,=\, 1$} 
\IF {$T[h_i(x)].\mbox{\textsf{keySum}}\,=\, x$}
\STATE 
\textbf{return} {$T[h_i(x)].\mbox{\textsf{valueSum}}$}
\ELSE 
\STATE 
\textbf{return} \mbox{null}
\ENDIF
\ENDIF
\ENDFOR
\STATE \textbf{return} ``not found''
\end{algorithmic}
\end{itemize}

Recall that for now we assume that all insertions and deletions are
done correctly, that is, no insert will be done for an existing key in
$\cal B$ and no delete will be performed for a key-value pair not already
in $\cal B$.  With this assumption, if the above operation 
returns a value $y$ or the \mbox{null} value, then this is the correct
response.  This method may fail, returning ``not found,''
if it can find no cell that $x$ maps to that 
holds only one entry.  Also, as a value is returned only if the count is 1,
overflow of the sum fields is not a concern.  

For a key $x$ that is in $\cal B$, 
consider the probability $p_0$
that each of its hash locations contains no other item.  
Using the standard analysis for Bloom filters 
(e.g., see~\cite{broder2004network}),
we find $p_0$ is:
\[
p_0 = \left(1-\frac{k}{m}\right)^{(n-1)} \approx e^{-kn/m} .
\]
That is, assuming the table is split into $k$ subtables of size $m/k$ (one for each hash
function), each
of the other $n-1$ keys misses the location independently with probability $1-k/m$. 
One nice interpretation of this is that the number of keys that hash to the cell
is approximately a Poisson random variable with mean $kn/m$, and $e^{-kn/m}$
is the corresponding probability a cell is empty.
The probability that a {\sc get} for a key that is in $\cal B$
returns ``not found''
is therefore approximately
\[
(1-p_0)^k \approx
\left(1-e^{-kn/m}\right)^k,
\]
which corresponds to the false-positive rate for a standard Bloom filter.
As is standard for these arguments, these approximations can be readily replaced
by tight concentration results~\cite{broder2004network}.

The probability that a {\sc get} for a key that is {\em not} in $\cal B$ returns``not found''
instead of \mbox{null} can be found similarly.  Here, however, note that every cell hashed to
by that key must be hashed to by at least two other keys from $\cal B$.
This is because an empty cell would lead to a null return value, and
a cell with just one key hashed to it would yield the corresponding true
key value, and hence also lead to a null return value for a key not in $\cal B$.
Using the same approximation -- specifically, that the number of keys from $\cal B$
that land in a cell is approximately distributed as a discrete Poisson random variable with
mean $kn/m$ -- we find this probability is 
\[
\left(1-e^{-kn/m} - \frac{kn}{m}e^{-kn/m} \right)^k.
\]

\subsection{Listing Set Entries}
Let us next consider the method for listing the contents of $\cal B$.
We describe this method in a destructive
fashion---if one wants a non-destructive method, then one should first
create a copy of $\cal B$ as a backup.
\begin{itemize}
\item {\sc listEntries}$()$:
\begin{algorithmic}[100]
\WHILE {there's an $i\in[1,m]$ with $T[i].\mbox{\textsf{count}} = 1$}
\STATE add the pair
$(T[i].\mbox{\textsf{keySum}}\, ,\, T[i].\mbox{\textsf{valueSum}})$ to the output list
\STATE call {\sc delete}($T[i].\mbox{\textsf{keySum}}\, , \, T[i].\mbox{\textsf{valueSum}}$)
\ENDWHILE
\end{algorithmic}
\end{itemize}
It is a fairly straightforward exercise to implement this method in $O(m)$
time, say, by using a link-list-based priority queue of cells in $T$ indexed by
their \textsf{count} fields and modifying the {\sc delete} method to update this
queue each time it deletes an entry from $\cal B$.

If at the end of the while-loop all the entries in $T$ are empty, then we say
that the method {\em succeeded}
and we can confirm that the output list is the
entire set of entries in $\cal B$.
If, on the other hand, there are some cells in $T$ with non-zero counts, then
the method only outputs a partial list of the key-value pairs in $\cal B$.

This process should appear entirely familiar to those who work with
random graphs and hypergraphs.  It is exactly the same procedure 
used to find the 2-core of a random hypergraph (e.g., see~\cite{dietzfelbinger2009tight,molloy2004pure}).  
To make the
connection, think of the cells as being vertices in the hypergraph, and the
key-value pairs as being hyperedges, with the vertices for an edge
corresponding to the hash locations for the key.  The 2-core is
the largest sub-hypergraph that has minimum degree at least 2.  The
standard ``peeling process'' finds the 2-core: while there exists a
vertex with degree 1, delete it and the corresponding hyperedge.  The
equivalence between the peeling process and the scheme for {\sc listEntries} is
immediate.  We note that this peeling process is similarly used for
various erasure-correcting codes, such as Tornado codes and its
derivatives (e.g., see~\cite{luby2001efficient}), 
that have, in some ways, the same flavor as this
construction\footnote{Following this analogy, 
	one could for example, consider {\em irregular}
	versions of the IBLT, where different keys utilize a different number
	of hash values;  
	such a variation could use less space while allowing
        {\sc listEntries} to succeed, or could be used 
	to allow some keys with more hash locations
	to obtain a better likelihood of a successful lookup.  
	These variations are
	straightforward and we do not consider the details further here.}.

Assuming that the cells associated with a key are chosen
uniformly at random, we use known results on 2-cores of random
hypergraphs.  In particular, tight thresholds are known; when the
number of hash values $k$ of each is at least 2, there are constants $c_k >
1$ such that if $m > (c_k +\epsilon)n$ for any constant $\epsilon >
0$, {\sc listEntries} succeeds with high probability, that is with
probability $1-o(1)$.  Similarly, if $m < (c_k - \epsilon)n$
for any constant $\epsilon > 0$, {\sc listEntries} succeeds with probability $o(1)$.
Hence $t = m/c_k$ is (approximately) the design threshold for the IBLT.
As can be found in for example
\cite{dietzfelbinger2009tight,molloy2004pure}, these values are given
by
$$c_k^{-1} = \sup \left \{ \alpha : 0 < \alpha < 1; \forall x \in (0,1), 1 - e^{-k\alpha x^{k-1}} < x \right \}.$$

It is easy to check from this definition that $c_k \leq k$, as for
$\alpha = 1/k$ we immediately have $1 - e^{-x^{k-1}} < x$.  
In fact $c_k$ grows much mor slowly with $k$, as shown in 
Table~\ref{tab:thr}, which gives numerical values 
for these thresholds for $3 \leq k \leq 7$. 
Here we are not truly concerned with the
exact values $c_k$; it is enough that only linear space is required.
It is worthwhile noting that $c_k$ is generally close to 1, while to
obtain successful {\sc get} operations we require a number of cells which is
a significant constant factor larger than $n$.  Therefore, in practice
the choice of the size of the IBLT will generally be determined by the
desired probability for a successful {\sc get} operation, not the need for
listing.  (For applications where {\sc get} operations are unimportant and
listing is the key feature, further improvements can be had by using
irregular IBLTs.)
 
\begin{table}[ht]
\begin{center}
\begin{tabular}{c|ccccc}
$k$ & 3 & 4 & 5 & 6 & 7\\\hline
$c_k$ & 1.222 & 1.295 & 1.425  & 1.570  & 1.721 \\
\end{tabular}
\end{center}
\caption{Thresholds for the 2-core rounded to four decimal places.}
\label{tab:thr}
\end{table}

When we design our IBLT, depending on the application, we may
want a target probability for succeeding in listing entries.
Specifically, we may desire failure to occur with probability
$O(t^{-c})$ for a chosen constant $c$ (whenever $n \leq t$).  By choosing $k$ sufficiently
large and $m$ above the 2-core threshold, we can ensure this;
indeed, standard results give that the bottleneck is the possibility
of having two edges with the same collection of vertices, giving a
failure probability of $O(t^{-k+2})$.  The following theorem follows
from previous work but we provide it for completeness.  

\begin{theorem}
As long as $m$ is chosen so that $m > (c_k + \epsilon) t$ for some 
$\epsilon > 0$, {\sc listEntries} fails with
probability $O(t^{-k+2})$ whenever $n \leq t$.
\end{theorem}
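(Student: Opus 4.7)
The plan is to translate the \textsc{listEntries} procedure into the peeling process on a random $k$-uniform hypergraph and then use the first-moment method to bound the probability that any obstruction to full peeling survives.

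First I would make the correspondence explicit: view the $m$ cells as vertices (partitioned into $k$ subtables of size $m/k$) and each inserted key as a $k$-edge whose endpoints are its $k$ hash locations, one per subtable. The \textsc{listEntries} procedure then deletes, one at a time, a vertex of degree $1$ together with its incident edge; so it succeeds precisely when the final 2-core of the hypergraph is empty. Under the hypothesis $m > (c_k + \epsilon)t \geq (c_k + \epsilon)n$, the standard 2-core threshold results (e.g.~\cite{dietzfelbinger2009tight,molloy2004pure}) already guarantee that with probability $1-o(1)$ the 2-core is empty; the job is to sharpen this $o(1)$ to $O(t^{-k+2})$.

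Next I would identify the dominant failure mode. Any non-empty 2-core must contain at least two edges (a single $k$-edge is peelable once any of its $k$ endpoints has degree $1$, which happens whenever no other edge uses that endpoint). The smallest obstruction is therefore a pair of edges sharing the \emph{same} set of $k$ vertices, which trivially forms a 2-core because every vertex then has degree $\geq 2$. A first-moment bound on this event is easy: for any two fixed keys, the probability that their $k$ hash values coincide coordinate-by-coordinate across the $k$ subtables is $(k/m)^{k}$, so the expected number of such colliding pairs is at most
\[
\binom{n}{2}\left(\frac{k}{m}\right)^{k} \;=\; O\!\left(\frac{t^{2}}{m^{k}}\right) \;=\; O(t^{-k+2}),
\]
since $m = \Theta(t)$. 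By Markov, the probability that any such pair exists is $O(t^{-k+2})$.

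The main obstacle, and the last step, is to show that all other potential 2-cores contribute only lower-order probability, so that the ``two parallel edges'' event really is the bottleneck. This is where I would appeal to the standard random-hypergraph machinery: for $m > (c_k + \epsilon)t$, a union bound over all possible $2$-core configurations of size $s$ (edges) and $v$ (vertices) can be shown to decay geometrically in $s$ once $s \geq 3$, because such a subgraph needs $v \leq (k-1)s/2 + O(1)$ vertices while supplying $ks$ hashed coordinates in a specified set; the resulting expectation is $O((n/m^{k-2})^{s/2 + O(1)})$, which for $s \geq 3$ is asymptotically smaller than $t^{-k+2}$. Summing over $s$ yields a total contribution of $o(t^{-k+2})$, so the overall failure probability is $O(t^{-k+2})$ as claimed. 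The calculation for $s \geq 3$ is the routine-but-tedious part; everything else is just the translation to the hypergraph picture and the parallel-edge count above.
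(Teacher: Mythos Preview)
Your approach is essentially the same as the paper's: translate \textsc{listEntries} to peeling on a random $k$-uniform hypergraph, argue by a first-moment/union bound over candidate $2$-cores of $j$ edges, and identify the $j=2$ ``parallel edges'' case as the dominant term. Your calculation for $j=2$ is correct and matches the paper's $O(t^{-k+2})$.

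Two points where your sketch diverges from a workable argument and from what the paper actually does:

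\textbf{The vertex count and the resulting expectation.} A $2$-core on $s$ edges satisfies $2v \le ks$, i.e.\ $v \le ks/2$, not $(k-1)s/2 + O(1)$. Using the correct bound, the paper's union bound gives a term of order
\[
\binom{n}{j}\binom{m}{jk/2}\left(\frac{jk}{2m}\right)^{jk},
\]
which (with $m=\Theta(n)$) behaves like $(j/n)^{j(k-2)/2}$ up to factors exponential in $j$. Your stated expectation $O\bigl((n/m^{k-2})^{s/2+O(1)}\bigr)$ does not reproduce this; in particular for $k=3$ it fails to decay at all. The ``routine-but-tedious'' part does go through, but not with the formulas you wrote.

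\textbf{The large-$j$ regime.} The union bound above is only claimed (and only obviously works) for $j \le \gamma n$ for a suitable constant $\gamma$; the constants hidden in the exponential-in-$j$ factor can beat the $(j/n)^{j(k-2)/2}$ decay once $j$ is linear in $n$. The paper handles $j > \gamma n$ separately by invoking the standard concentration analysis of the peeling process (e.g.\ \cite{darling2008differential}): above the threshold $m > (c_k+\epsilon)n$, the probability that peeling stalls before shrinking the core below $\delta n$ is exponentially small in $n$. Your ``summing over $s$'' elides this split; you should flag that the linear-size regime needs the peeling/differential-equation argument rather than the crude union bound.
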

\begin{proof}
We describe the result in terms of the 2-core. In what follows we assume $n \leq t$.
The probability that $j$ hyperedges form a non-empty 2-core is dominated
by the probability that these edges utilize only $jk/2$ vertices.  This
probability is at most
\begin{eqnarray*}
{n \choose j}{m \choose jk/2} \left(\frac{jk}{2m} \right )^{jk} & \leq 
& \left ( \frac{n}{e} \right )^j \left ( \frac{2m}{jk} \right )^{jk/2}
\left(\frac{jk}{2m} \right )^{jk} \\
& = &  \frac{n^j}{m^{jk/2}} \left ( \frac{jke}{m} \right)^{jk/2} 
\left (\frac{e}{j} \right)^j. 
\end{eqnarray*}
For $k$ constant, $m > (c_k +\epsilon)n$, and $j \leq \gamma n$ for some constant
$\gamma$, the sum of these probabilities is dominated by the term
where $j=2$, which corresponds to a failure probability of
$O(n^{-k+2})$.  To deal separately with the case of $j > \gamma n$, 
we note that standard analysis of the peeling process shows that, 
as long as $m$ is above the decoding threshold, the probability that
the peeling process fails before reaching a core of size $\delta n$
for any constant $\delta$ is asymptotically exponentially small in $n$.
(See, e.g., \cite{darling2008differential}.)   By this argument the case 
of $j > \gamma n$ adds a vanishing amount to the failure probability, 
completing the proof of the theorem.
\end{proof}

\section{Adding Fault Tolerance to an Invertible Bloom Lookup Table}

For cases where there can be deletions for key-value pairs that are
not already in $\cal B$, or values can be inserted for keys that are
already in $\cal B$, we require some fault tolerance.  We can utilize
a standard approach of adding random checksums to get better fault tolerance.

\paragraph{Extraneous Deletions}
Let us first consider a case with extraneous deletions only.  
Specifically,
we assume a key-value pair might be deleted without a 
corresponding insertion;
however, in this first setting we still assume each 
key is associated with a single
value, and is not inserted or deleted multiple times at any instant.
This causes a variety of problems for both the {\sc get} and {\sc listEntries} routines.
For example, it is possible for a 
cell to have an associated count of 1 even
if more than one key has hashed to it, if there are corresponding
extraneous deletions;  this causes us to re-evaluate our {\sc listEntries}
routine.

To help deal with these issues, we add to our IBLT structure.
We assume that each key $x$ has an additional hash value
given by a hash function $G_1(x)$, which in general we assume
will take on uniform random values in a range $[1,R]$. 
We then require each cell 
has the following additional field:
\begin{itemize}
\item
a \textsf{hashkeySum} field,
which is the sum of the hash values, $G_1(x)$, for all the keys 
that have been mapped to this cell.
\end{itemize}
The \textsf{hashkeySum} field must be of 
sufficiently many bits and the hash function must be 
sufficiently random to make collisions sufficiently unlikely;  
this is not hard to achieve in practice.  
Our insertion and deletion operations must now change accordingly, in
that 
we now must add $G_1(x)$ to each $T[h_i(x)].\mbox{\textsf{hashkeySum}}$ 
on an insertion and subtract $G_1(x)$ during a deletion.   
The pseudocode for these and the other operations is 
given in Figure~\ref{fig:pseudo} at the end of this paper.

The {\textsf{hashkeySum}} field can serve as an extra check. For
example, to check when a cell has a count of 1 that it corresponds to
a cell without extraneous deletions, we check $G_1(x)$
field against the {\textsf{hashkeySum}} field.  For an error to occur,
we must have that a deletion has caused a count of 1 where the count
should be higher, and the hashed key values must align so that their
sum causes a false check.  This probability is clearly at most $1/R$
(using the standard principle of deferred decisions, the ``last hash''
must take on the precise wrong value for a false check).  We will 
generally assume that $R$ is chosen large enough that we can assume a
false match does not occur throughout the lifetime of the data structure,
noting that only $O(\log n)$ bits are needed to handle lifetimes that
are polynomial in $n$.  Notice that even if sum fields overflow, as long
as they overflow gracefully, the probability of a false check is still $1/R$.


Let us now consider {\sc get} operations.  The natural approach is to assume
that the {\textsf{hashkeySum}} field will not lead to a false check,
as above.  In this case, on a {\sc get} of a key $x$, if the count field is
0, and the {\textsf{keySum}} and {\textsf{hashkeySum}} are also 0, one
should assume that the cell is in fact empty, and return null.
Similarly, if the count field is 1, and the {\textsf{keySum}} and
{\textsf{hashkeySum}} match $x$ and $G_1(x)$, respectively, 
then one should assume the cell has the
right key, and return its value.  In fact, if the count field is $-1$,
and after negating {\textsf{keySum}} and {\textsf{hashkeySum}} the
values match $x$ and $G_1(x)$, respectively, 
one should assume the cell has the right key, except
that it has been deleted instead of inserted!  In our pseudocode we
return the value, although one could also flag it as an extraneous
deletion as well.  Note, however, that we can no longer return null if
the count field is 1 but the {\textsf{keySum}} field does not match
$x$; in this case, there could be, for example, an additional key
inserted and an additional key extraneously deleted from that cell,
which would cause the field to not match even if $x$ was hashed to
that cell.  If we let $n$ be the number of keys either inserted or
extraneously deleted in the IBLT, then this reduces the probability of
returning null for a key not in ${\cal B}$ to
$\left(1-e^{-kn/m}\right)^k$.  That is, to return null we must have at least
one cell with zero key-value pairs from ${\cal B}$ hashing to it, which occurs (approximately)
with the given probability (using our Poisson approximation).  


For the {\sc listEntries} operation, we again use the {\textsf{hashkeySum}}
field to check when a cell has a count of 1 that it corresponds to a cell
without extraneous deletions.  Note here that an error in this check
will cause the entire listing operation to fail, so the
probability of a false check should be made quite low---certainly
inverse polynomial in $n$.  Also note, again, that we can make
progress in recovering keys with cells with a count of $-1$ as well, if the
cell contains only one extraneously deleted key and no inserted keys.
That is,
if a cell contains a count of $-1$, we can negate the
\textsf{count}, \textsf{keySum}, and {\textsf{hashkeySum}} fields, check
the hash value against the key to prevent a false match, and if that
check passes recover the key and remove it (in this case, add it back
in) to the other associated cells.  Hence, a cell cannot yield a key
during the listing process only if more than 
one key, either inserted or deleted,
has hashed to that cell. This is now exactly the same setting as in
the original case of no extraneous deletions, and hence (assuming
that no false checks occur!) the same analysis applies, with $n$
representing the number of keys either inserted or extraneously
deleted.
We give the revised pseudo-code descriptions in Figure~\ref{fig:pseudo}.

\paragraph{Multiple Values}
A more challenging case for fault tolerance occurs when a key can be
inserted multiple times with different values, or inserted and deleted
with different values.  If a key is inserted multiple times with
different values, not only can that key not be recovered, but every
cell associated with that key has been {\em poisoned}, in that it will
not be useful for listing keys, as it cannot have a count of 1 even as
other values are recovered.  (A later deletion of a key-value pair
could correct this problem, of course, but the cell is poisoned at the
time.)  The same is true if a key is inserted and deleted
with different values, and 
here the problem is potentially even worse: if a single
other key hashes to that cell, the count may be 1 and the
\textsf{keySum} and {\textsf{hashkeySum}} fields will be correct even
though the \textsf{valueSum} field will not match the other key's
value, causing errors.

Correspondingly, we introduce an additional check 
for the sum of the values at a cell,
using a hash function $G_2(y)$ for the values, and adding the following field:
\begin{itemize}
\item
a {\textsf{hashvalueSum}} field,
which is the sum of the hash values $G_2(y)$ for all 
the values that have been mapped to this cell.
\end{itemize}
One can then check that the hash of the \textsf{keySum} and \textsf{valueSum} 
take on the appropriate values when
the \textsf{count} field of a cell is 1 (or $-1$) in order 
to see if listing the key-value pair
is appropriate. 

The question remains whether the poisoned cells will prevent recovery
of key values.  Here we modify the goal of {\sc listEntries} to
return all key-value pairs for all valid keys with high
probability---that is, all keys with a single associated value at that
time.  We first claim that if the invalid keys make up a constant
fraction of the $n$ keys that this is not possible under our construction with
linear space.  A constant fraction of the cells would then be
poisoned, and with constant probability each valid key would then hash
solely to poisoned cells, in which case the key could not be
recovered.  

However, it is useful to consider these probabilities, as in practical
settings these quantities will determine the probability of failure.  
For example, suppose $\gamma n$ keys are invalid for some constant $\gamma$.
By our previous analysis, the fraction of cells that are poisoned is concentrated
around $\left(1-e^{-k\gamma n/m}\right)$, and hence the probability that any specific valid
key has all of its cells poisoned is $\left(1-e^{-k\gamma n/m}\right)^k$.  
(While there are other possible ways a key could not be recovered, for example
if two keys have all but one of their cells poisoned and their remaining cell is
the same, this gives a good first approximation for reasonable values, as other
terms will generally be lower order when these probabilities are small.)  
For example, in a configuration we use in our experiments below,
we choose $k = 5$, $m/n = 8$, and $\gamma = 1/10$;  in this case, the probability
of a specific valid key being unrecoverable is approximately $8.16 \cdot 10^{-7}$,
which may be quite suitable for practice.  

For a more theoretical asymptotic analysis, suppose instead that there
are only $n^{1-\beta}$ invalid keys.  Then if each key uses at
least $\lceil 1/\beta \rceil + 1$ hash functions, with high
probability every valid key will have at least one hash location that
does not coincide with a invalid cell.  This alone does not guarantee
reconstructing the valid key-value pairs, but we can extend this idea
to show {\sc listEntries} can provably successfully obtain all valid keys
even with $n^{1-\beta}$ invalid keys; by using $k = \lceil
1/\beta \rceil + 4$ hash functions, we can guarantee with high
probability that every valid key has at least 3 hash locations without
an invalid cell.  (One can raise the probability to any inverse
polynomial in $n$ by changing the constant 4 as desired.)  Indeed, we
can determine the induced distribution for the number of neighbors that are
unpoisoned cells for the valid keys, but the important fact is that
the number of keys with $k$ hashes to unpoisoned cells in $n -
o(n)$.  It follows from the standard analysis of random cores (e.g.,
see Molloy~\cite{molloy2004pure}) that the same threshold as for the
original setting with $k$ hash functions applies.  Hence the number of
cells needed will again be linear in $n$ (with the coefficient
dependent on $\beta$) in order to guarantee successful listing of
keys with high probability.  This yields the following theorem:
\begin{theorem}
Suppose there are $n^{1-\beta}$ invalid keys.   
Let $k = \lceil 1/\beta \rceil + 4$.  Then if  
$m > (c_k + \epsilon) n$ for some $\epsilon > 0$, {\sc listEntries} succeeds with high probability.
\end{theorem}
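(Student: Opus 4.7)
The plan is to reduce the statement to the 2-core analysis already proved for the clean setting, modulo the effect of a vanishing fraction of poisoned cells. The key observation is that an invalid key can never be peeled, because whenever a cell containing one of its hashes reaches count $1$ the \textsf{hashvalueSum} check will (w.h.p.\ in the lifetime, using a large enough hash range $R$) reveal the inconsistency and block the removal. Consequently, every cell touched by at least one invalid key is permanently ``poisoned'' and unusable for peeling any valid key. Since there are only $n^{1-\beta}$ invalid keys, each hitting $k$ cells, the number of poisoned cells is at most $k n^{1-\beta} = o(n)$.

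Next, I would classify the valid keys by their number of poisoned hash locations. Under the subtable hashing convention, the probability that a given hash of a fixed valid key lands in a poisoned cell is $O(n^{-\beta})$. By a union bound, the expected number of valid keys with any poisoned hash is $O(n \cdot k \cdot n^{-\beta}) = o(n)$, and—more importantly—the probability that \emph{any} valid key has fewer than $3$ unpoisoned hash locations is at most
\[
n \cdot \binom{k}{k-2} \bigl(O(n^{-\beta})\bigr)^{k-2} = O\!\left(n^{1 - \beta(k-2)}\right),
\]
which, for $k - 2 \geq \lceil 1/\beta \rceil + 2$, is $o(1)$. So with high probability every valid key has at least $3$ unpoisoned hash locations, and all but $o(n)$ of them have all $k$ hashes unpoisoned.

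The main step is then to run the 2-core argument on the hypergraph $H'$ consisting of valid keys whose $k$ hashes are all unpoisoned, viewed as edges on the $m - o(n)$ unpoisoned cells. Conditional on landing in unpoisoned cells, these hashes are uniform, so $H'$ is (asymptotically) a uniformly random $k$-uniform hypergraph with edge-to-vertex ratio $(n - o(n))/(m - o(n)) < 1/(c_k + \epsilon/2)$. By the standard threshold result (Molloy, and the moment calculation used in the preceding theorem), the 2-core of $H'$ is empty w.h.p., so peeling clears all of $H'$.

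The remaining obstacle—where I expect the real work to be—is to argue that the $o(n)$ ``residual'' valid keys, which have between $3$ and $k-1$ unpoisoned hashes, do not resurrect a non-empty 2-core on the unpoisoned cells. I would handle this by the same first-moment bound as in the earlier theorem: the probability that a set of $j$ such residual edges, possibly together with some edges of $H'$, forms a non-empty 2-core is dominated by the event that $j$ edges with at least $3$ endpoints each cover at most $3j/2$ unpoisoned vertices, whose probability is $O((j/m)^{j/2})$ times the number of ways to choose the edges. Summing over $j \geq 2$, using that there are only $O(n^{1-\beta})$ residual edges in expectation, makes the total contribution $o(1)$. Combined with the emptiness of the core of $H'$, this shows that the peeling process on the full hypergraph of valid keys succeeds with high probability, completing the proof.
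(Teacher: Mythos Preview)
Your first three steps match the paper's argument almost verbatim: count the poisoned cells as $O(kn^{1-\beta})=o(n)$, use a union bound to show that with $k=\lceil 1/\beta\rceil+4$ every valid key retains at least three unpoisoned hash locations w.h.p., and observe that $n-o(n)$ of the valid keys keep all $k$ of their hashes unpoisoned.

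The divergence is in the final step. The paper does \emph{not} decompose into $H'$ and a residual and then run a separate first-moment calculation. Instead it observes that the induced hypergraph on unpoisoned cells is an \emph{irregular} random hypergraph in which a $1-o(1)$ fraction of edges have size $k$ and an $o(1)$ fraction have size in $\{3,\dots,k-1\}$, and then invokes Molloy's analysis of random cores directly: the emptiness threshold is a continuous function of the edge-size distribution, so an $o(1)$ perturbation of the pure $k$-uniform distribution leaves the threshold at $c_k$. That single citation replaces your entire step 4.

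Your step 4, as sketched, has a gap. If the 2-core $C$ of $H'\cup R$ is non-empty, you correctly note it must contain at least one residual edge (else $C$ would be a 2-core of $H'$). But $C$ can also contain arbitrarily many $H'$-edges, and your first-moment bound considers only the $j$ residual edges and asks that \emph{they} cover at most $3j/2$ vertices. There is no reason the residual edges in $C$ should be confined to $3j/2$ vertices; the degree-2 condition in $C$ is enforced jointly by residual and $H'$-edges, so a handful of residual edges can ``prop up'' a large $H'$-substructure without themselves forming anything dense. The sum you write down therefore does not control the event you need. To repair this route you would have to run the first-moment bound over all of $C$ (residual plus $H'$ edges together), which essentially reproduces the full 2-core analysis for the irregular ensemble---i.e., exactly what the paper gets by citing Molloy.
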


While this asymptotic analysis provides some useful insights, namely that
full recovery is practical, in practice we expect the analysis above based
on setting $\gamma$ so that the number of invalid keys is $\gamma n$ will
prove more useful for parameter setting and predicting performance.

\paragraph{Extensions to Duplicates}
Interestingly, using the same approach as for extraneous deletions,
our IBLT can handle the setting where the {\em same} key-value pair is inserted multiple
times.
Essentially, this means the IBLT is robust to duplicates, or can
also be used to obtain a count for key-value pairs that are inserted
multiple times. 
We again use
the additional {\textsf{hashkeySum}} and {\textsf{valueSum}} fields.
When the \textsf{count} field is $j$ for the cell, we
take the {\textsf{keySum}}, {\textsf{hashkeySum}}, and
{\textsf{valueSum}} fields and divide them by $j$ to obtain the
proposed key, value, and corresponding hash.  
(Here, note we cannot use XORs in place of sums in our algorithms.)
If the key hash matches,
we assume that we have found the right key and return the 
lookup value or list the key-value pair accordingly, depending on whether a {\sc get} or {\sc listEntries}
operation is being performed.  If it is possible to have the same key appear
with multiple values, as above, then we must {\em also} make use of the
{\textsf{hashvalueSum}} fields, dividing it by $j$ and using it
to check that the value is correct as well.
For the listing operation, the IBLT
deletes $j$ copies of the key-value pair from the other cells.\footnote{Note that here we are making use of the assumption that the hash locations are distinct for a key;  otherwise, the count for the number of copies at this location might not match the number of copies of the key in all the other locations.}  The point here is that a key that appears
multiple times, just as a key that is deleted rather than inserted,
can be handled without significant change to the decoding approach.

The one potential issue with duplicate key-value pairs is in the case of 
word-value overflow
for the memory locations containing the sum; in case of overflow, it
may be that one does not detect that the key hash matches (and
similarly for the {\textsf{hashvalueSum}} fields).  In practice this
may limit the number of duplicates that can be tolerated; however, for
small numbers of duplicates and suitably sized memory fields, overflow 
will be a rare occurrence (that would require large numbers of keys to
hash to the same cell, a provably probabilistically unlikely event).

\paragraph{Fault Tolerance to Lost Memory Subblocks}

We offer one additional way this structure proves resilient to various
possible faults.  Suppose that the structure is indeed set up with $k$
different memory subblocks, one for each hash function.  Conceivably,
one could lose an entire subblock, and still be able to recover all
the keys in a listing with high probability, with only a reduction in
the success probability of a {\sc get} (as long as $k-1$ hashes with a smaller range
space remains sufficient for listing).  In some sense, this is because the
system is arguably overdesigned; obtaining high lookup success
probability when $n$ is less than the threshold $t$ requires a large number
of empty cells, and this space is far more than is needed for decoding.  

\paragraph{An Example Application}

As an example of where these ideas might be used, we return to our
mirror site application.  An IBLT $\cal B$ from Alice can be used by
Bob to find filename-checksum (key-value) pairs where his filename has
a different checksum than Alice's.  After deleting all his key-value
pairs, he lists out the contents of $\cal B$ to find files that he or
Alice has that the other does not. The IBLT might not be empty at this
point, however, as the listing process might not have been able to
complete due to {\em poisoned} cells, where deletions were done for
keys with values different than Alice's values.  To discover these,
Bob can re-insert each of his key-value pairs, in turn, to find any
that may unpoison a cell in $\cal B$ (where he immediately deletes
ones that don't lead to a new unpoisoned cell).  If a new unpoisoned
cell is found found (using the $G_1$ hash function as a check),
then Bob can then remove a key-value pair with the same key as his but
with a different value (that is, with Alice's value).  Note Bob
may then also be able to possibly perform more listings of keys that might
have been previously unrecovered because of the poisoned cells.
Repeating this will discover with high probability all the key-value pairs 
where Alice and Bob differ.

\section{Space-Saving for an Invertible Bloom Lookup Table}

Up to this point, we have not been concerned with minimizing space for
our IBLT structure, noting that it can be done in linear space.  
Nevertheless, there
are a variety of techniques available for reducing the space required,
generally at the expense of additional computation and shuffling of
memory, while still keeping constant amortized or worst-case time
bounds on various operations.  Whether such efforts are worthwhile
depends on the setting; in some applications space may be
the overriding issue, while in others speed or even simplicity might
be more relevant.  We briefly point to some of the previous work that
can offer further insights.  

IBLTs, like other Bloom filter structures, often have a great deal of
wasted space corresponding to zero entries that can be compressed or
fixed-length space required for fields like the \textsf{count} field
that can be made variable-length depending on the value.  This wasted
space can be compressed away using techniques for keeping compressed
forms of arrays, including those for storing arrays of variable-length
strings.  Such mechanisms are explored in for example
\cite{blandford2008compact,pagh2005optimal,raman-succinct} and can be applied here.

A simpler, standard approach (saving less space) is to use
quotienting, whereby the hash value for a key determines a bucket and
an additional quotient value that can be stored.  Quotienting can
naturally be used with the IBLT to reduce the space used for storing
for example the \textsf{keySum} or 
the \textsf{hashkeySum} values.  Also, as previously mentioned, 
in settings without multiple copies of the same key-value pair,
we can use XORs in place of sums to save space.

Finally, we recall that the space requirements arise because of the desire
for high accuracy for {\sc get} operations, not because of the {\sc listEntries} operation.  If one is
willing to give up lookup accuracy---which may be the case if, for example,
one expects the system to be overloaded much of the time---then less space
is needed to maintain successful listing.  

\section{Simulations and Experiments}

We have run a number of simulations to test the IBLT structure and our
analysis.  In these experiments we have not focused on running time; a
practical implementation could require significant optimization.
Also, we have not concerned ourselves with issues of word-value
overflow.  Because of this, there is no need to simulate the data
structure becoming overloaded and then deleting key-value pairs, as
the state after deletions would be determined entirely by the
key-value pairs in the system.  Instead, we focus on the success
probability of the listing of keys and, to a lesser extent, on the
success probability for a {\sc get} operation.  Overall, we have found
that the IBLT works quite effectively and the performance matches our
theoretical analysis.  We provide a few example results.  In all of
the experiments here, we have chosen to use five hash functions.

\begin{figure*}[ht]
  \begin{center}
    \subfigure[{10,000} keys]{\label{fig:plot1}\includegraphics[scale=0.60]{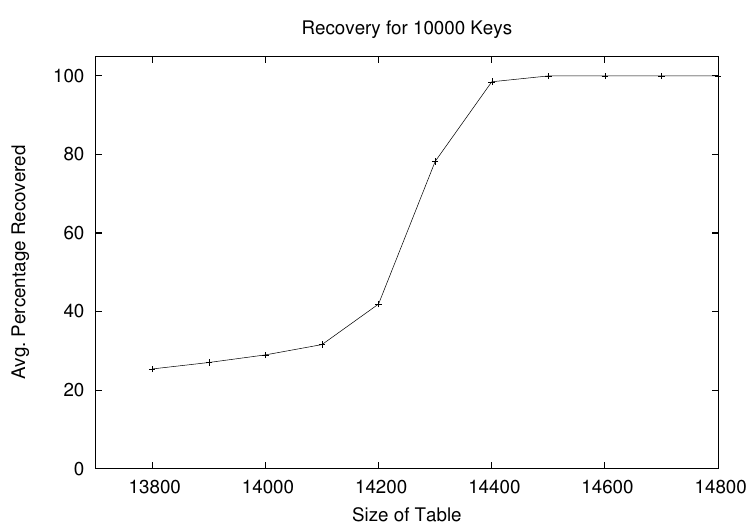}}
    \subfigure[{100,000} keys]{\label{fig:plot2}\includegraphics[scale=0.60]{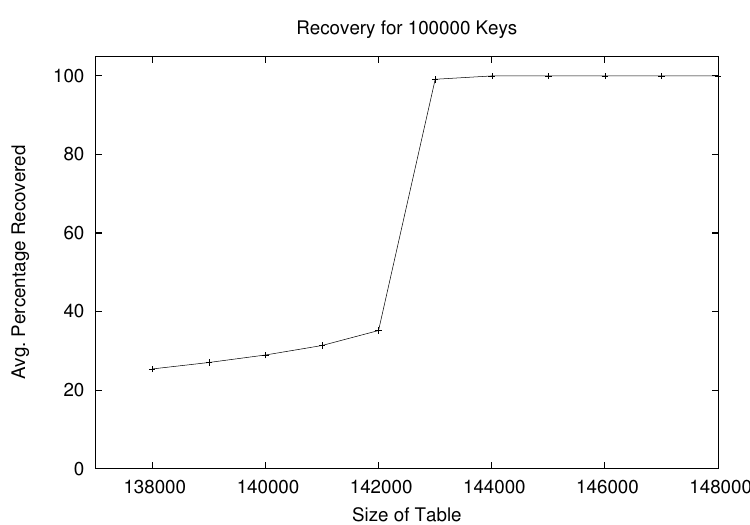}} 
  \end{center}
  \caption{Percentage of key-value pairs recovered around the threshold.  Slightly over the theoretical asymptotic
threshold, we obtain full recovery of all key-value pairs with {\sc listEntries} on all simulations.  Each data point
represents the average of {20,000} simulations.}
  \label{fig:plotx}
\end{figure*}

First, our calculated asymptotic thresholds for decoding from
Table~\ref{tab:thr} are quite accurate even for reasonably small
values.  For example, in the setting where there are no duplicate keys
or extraneous deletions of keys, we repeatedly performed {20,000}
simulations with {10,000} keys, and varied the number of cells.
Table~\ref{tab:thr} suggests an asymptotic threshold for listing all
entries near {14,250}.  As shown in Figure~\ref{fig:plot1}, around
this point we see a dramatic increase in the average number of key-value pairs
recovered when performing our {\sc listEntries} operation.  In fact, at
{14,500} cells only two trials failed to recover all key-value pairs,
and with {14,600} cells or more all trials successfully recover all
key-value pairs.  We performed an additional {200,000} trials with
{14,600} cells, and again all trials succeeded.  In
Figure~\ref{fig:plot2}, we consider {20,000} simulations with
{100,000} keys, where the corresponding threshold should be near
{142,500}.  With more keys, we expect tighter concentration around
the threshold, and indeed with {144,000} cells or more all trials
successfully recover all key-value pairs.  We performed an additional
{200,000} trials with with {144,000} cells, and again all trials
succeeded.

We acknowledge that more simulations would be required to obtain
detailed bounds on the probability of failure to recover all key-value
pairs for specific values of the number of key-value pairs and cells.  This is
equivalent to the well-studied problem of ``finite-length analysis''
for related families of error-correcting codes.  Dynamic programming
techniques, as discussed in \cite{KLS} and subsequent follow-on work,
can be applied to obtain such bounds.  

Our next tests of the IBLT allow duplicate keys with the same value and extraneous
deletions, but without keys with multiple values.  Our analysis
suggests this should work exactly as with no duplicate or extraneous
deletions, and our simulations verify this.  In these simulations, we
had each key result in a duplicate with probability $1/5$, and
each key result in a deletion in place of an insertion with
probability $1/5$.  Using a check on key and value fields, in {20,000}
simulations with {10,000} keys, {80,000} cells, and 5 hash functions,
a complete listing was obtained every time, and {\sc get} operations were
successful on average for {97.83} percent of the keys, matching the
standard analysis for a Bloom filter.  Results were similar with
{20,000} runs with {100,000} keys and {800,000} cells, again with
complete recovery each time and {\sc get} operations successful on average for
{97.83} percent of the keys.

\begin{figure*}[ht]
  \begin{center}
    \subfigure[{10,000} keys]{\label{fig:plot3}\includegraphics[scale=0.60]{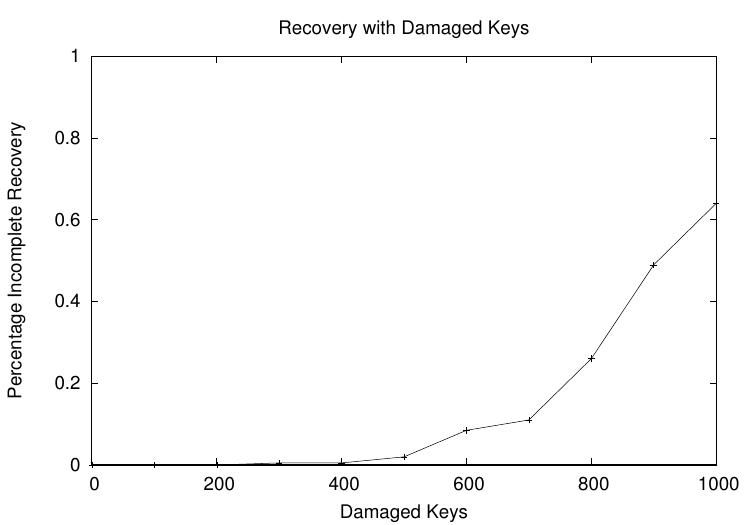}}
    \subfigure[{100,000} keys]{\label{fig:plot4}\includegraphics[scale=0.60]{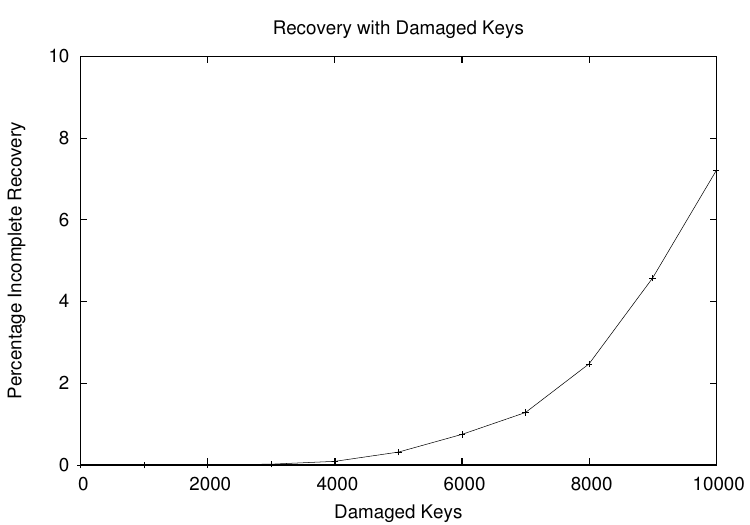}} 
  \end{center}
  \caption{Percentage of trials with incomplete recovery with ``damaged'' keys that have multiple values. Each data point
represents the average of {20,000} simulations.}
  \label{fig:ploty}
\end{figure*}

Finally, we tested the IBLT with keys that erroneously obtain multiple
values.  As expected, these keys can prevent recovery of other
key-value pairs during listing, but do not impact the success probability of {\sc get}
operations for other keys.  For example, again using a check on key
and value fields, in {20,000} simulations with {10,000} keys of which
500 had multiple values, {80,000} cells, and 5 hash functions, the
9500 remaining key-value pairs were recovered {19,996} times; the
remaining 4 times all but one of the 9500 key-value pairs was recovered.  With
1000 keys with multiple values, the remaining key-value pairs were
recovered {19,872} times, and again the remaining 128 times all but
one of the 9000 key-value pairs was recovered.  The average success rate for {\sc get}
operations remained {97.83} percent on the valid keys in both cases, as would be
expected.  We note that with {10,000} keys with {1,000} with multiple
values, our previous back-of-the-envelope calculation showed that each
valid key would fail with probability roughly $8.16 \cdot 10^{-7}$;
hence, with {9,000} other keys, assuming independence we would
estimate the probability of complete recovery at approximately
$0.9927$, closely matching our experimental results.  More detailed
results are give in Figures~\ref{fig:plot3} and~\ref{fig:plot4}, where
we vary the number keys with multiple values for two settings: 
{10,000} keys and {80,000} cells, and {100,000} keys and {800,000} cells.
The results are based on {20,000} trials.  As can be seen complete 
recovery is possible with large numbers of multiple-valued keys in both
cases, but naturally the probability of complete recovery becomes worse with
larger numbers of keys even if the percentage of invalid keys is the same.

We emphasize that even when complete recovery does not occur in this
setting, generally almost all keys with a single value can be
recovered.  For example, in Table~\ref{keyexp} we consider three
experiments.  The first is for {10,000} keys, {80,000} cells, and
{1,000} keys with duplicate values.  The second is the same but with
{2,000} keys with duplicate values.  The third is for {100,000} keys,
{800,000} cells, and {10,000} keys with duplicate values.  Over all
{20,000} trials for each experiment, in no case were more than 3 valid
keys unrecovered.  The main point of Table~\ref{keyexp} is that with
suitable design parameters, even when complete recovery is not
possible because of invalid keys, the degradation is minor.
We suspect this level of robustness may be useful for 
some applications where almost-complete recovery is acceptable.

\begin{table}[ht]
\begin{center}
\begin{tabular}{c|cccc|}
           & \multicolumn{4}{|c|}{Unrecovered Keys} \\ \hline
           & 0  & 1 & 2 & 3 \\\hline
Experiment 1 & 99.360 & 0.640 & 0.000 & 0.000 \\ \hline
Experiment 2 & 83.505 & 14.885  & 1.520 & 0.090 \\ \hline
Experiment 3 & 92.800 & 6.915 & 0.265 & 0.020 \\ \hline
\end{tabular}
\end{center}
\caption{Percentage of trials where 1, 2, and 3 keys are left unrecovered.}
\label{keyexp}
\end{table}

\section{Conclusion and Future Work}
We have given an extension to the Bloom filter data structure to key-value
pairs and the ability to list out its contents.
This structure is deceptively simple, but is able to achieve functionalities
and efficiencies that appear to be unique in many respects, based on our
analysis derived from recent results on 2-cores in hypergraphs.
One possible direction for future work includes whether one 
can easily include methods for 
allowing for multiple values as a natural condition instead of an error.

\subsection*{Acknowledgments}
Michael Goodrich was supported in part by the National Science
Foundation under grants 0724806, 0713046, 0847968, and 0953071.
Michael Mitzenmacher 
was supported in part by the National Science
Foundation under grants IIS-0964473, CCF-0915922 and CNS-0721491, and 
in part by grants from Yahoo! Research, Cisco, Inc., and Google.

{\raggedright
\bibliographystyle{abbrv}
\bibliography{bloom,morebibentries,crypto,cuckoo,cuckoo2}
}

\begin{figure*}[p]
{
\begin{itemize}
\item
{\sc insert}$(x,y)$:
\begin{algorithmic}[100]
\FOR {each $h_i(x)$ value, for $i=1,\ldots,k$}
\STATE
add $1$ to $T[h_i(x)].\mbox{\textsf{count}}$
\STATE
add $x$ to $T[h_i(x)].\mbox{\textsf{keySum}}$
\STATE
add $y$ to $T[h_i(x)].\mbox{\textsf{valueSum}}$
\STATE
add $G_1(x)$ to $T[h_i(x)].\mbox{\textsf{hashkeySum}}$
\ENDFOR
\end{algorithmic}
\item
{\sc delete}$(x,y)$:
\begin{algorithmic}[100]
\FOR {each $h_i(x)$ value, for $i=1,\ldots,k$}
\STATE
subtract $1$ from $T[h_i(x)].\mbox{\textsf{count}}$
\STATE
subtract $x$ from $T[h_i(x)].\mbox{\textsf{keySum}}$
\STATE
subtract $y$ from $T[h_i(x)].\mbox{\textsf{valueSum}}$
\STATE
subtract $G_1(x)$ to $T[h_i(x)].\mbox{\textsf{hashkeySum}}$
\ENDFOR
\end{algorithmic}
\item
{\sc get}$(x)$:
\begin{algorithmic}[100]
\FOR {each $h_i(x)$ value, for $i=1,\ldots,k$}
\IF {$T[h_i(x)].\mbox{\textsf{count}}= 0$ \textbf{and}
      $T[h_i(x)].\mbox{\textsf{keySum}}= 0$  \textbf{and}
      $T[h_i(x)].\mbox{\textsf{hashkeySum}}= 0$}
\STATE 
\textbf{return} \mbox{null}
\ELSIF {$T[h_i(x)].\mbox{\textsf{count}}= 1$ \textbf{and}
      $T[h_i(x)].\mbox{\textsf{keySum}}= x$  \textbf{and}
      $T[h_i(x)].\mbox{\textsf{hashkeySum}}= G_1(x)$ }
\STATE 
\textbf{return} {$T[h_i(x)].\mbox{\textsf{valueSum}}$}
\ELSIF {$T[h_i(x)].\mbox{\textsf{count}}= -1$ \textbf{and}
      $T[h_i(x)].\mbox{\textsf{keySum}}= -x$ \textbf{and}
      $T[h_i(x)].\mbox{\textsf{hashkeySum}}= -G_1(x)$ }
\STATE 
\textbf{return} {$-T[h_i(x)].\mbox{\textsf{valueSum}}$}
\ENDIF
\ENDFOR
\STATE \textbf{return} ``not found''
\end{algorithmic}
\item {\sc listEntries}$()$:
\begin{algorithmic}[100]
\WHILE {there is an $i\in[1,m]$ such that $T[i].\mbox{\textsf{count}} = 1$ or $T[i].\mbox{\textsf{count}} = -1$}
\IF {$T[h_i(x)].\mbox{\textsf{count}}\,=\, 1$ \textbf{and} $T[h_i(x)].\mbox{\textsf{hashkeySum}}\,=\, G_1(T[h_i(x)].\mbox{\textsf{keySum}})$ }
\STATE add the pair,
$(T[i].\mbox{\textsf{keySum}}\, ,\, T[i].\mbox{\textsf{valueSum}})$,
to the output list
\STATE call {\sc delete}($T[i].\mbox{\textsf{keySum}}$,\,$T[i].\mbox{\textsf{valueSum}}$)
\ELSIF {$T[h_i(x)].\mbox{\textsf{count}}\,=\, -1$ \textbf{and} $-T[h_i(x)].\mbox{\textsf{hashkeySum}}\,=\, G_1(-T[h_i(x)].\mbox{\textsf{keySum}})$ }
\STATE add the pair,
$(-T[i].\mbox{\textsf{keySum}}\, ,\, -T[i].\mbox{\textsf{valueSum}})$,
to the output list
\STATE call {\sc insert}($-T[i].\mbox{\textsf{keySum}}$,\,$-T[i].\mbox{\textsf{valueSum}}$)
\ENDIF
\ENDWHILE
\end{algorithmic}
\end{itemize}
}
\caption{Revised pseudo-code for tolerating extraneous deletions.}
\label{fig:pseudo}
\end{figure*}

\end{document}